\newcommand{\be}{\begin{equation}}
\newcommand{\ee}{\end{equation}}
\newtheorem{thm}{Theorem}
\newtheorem{rem}[thm]{Remark}
\newtheorem{lemma}[thm]{Lemma}
\newtheorem{prop}[thm]{Proposition}
\newtheorem{cor}[thm]{Corollary}
\newtheorem{ex}{Example}
\newcommand{\g}{{\mathfrak g}}
\newcommand{\legr}{{\mathfrak l}}
\newcommand{\rigr}{{\mathfrak r}}
\begin{document}
\title{A robust generalization of the Legendre transform for QFT} 

\author{D.M. Jackson$^1$, A. Kempf$^2$, A.H. Morales$^3$}
\affiliation{$ $ \\ $^1$Department of Combinatorics and Optimization\\
$^2$Departments of Applied Mathematics and Physics\\
University of Waterloo, Ontario N2L 3G1, Canada,\\
$^3$Department of Mathematics, UCLA, Los Angeles, USA}


\begin{abstract}

Although perturbative quantum field theory is highly successful, it possesses a number of well-known analytic problems, from ultraviolet and infrared divergencies to the divergence of the perturbative expansion itself. As a consequence, it has been difficult, for example, to prove with full rigor that the Legendre transform of the quantum effective action is the generating functional of connected graphs. Here, we give a rigorous proof of this central fact. To this end, we show that the Legendre transform can be re-defined purely combinatorially and that it ultimately reduces to a simple homological relation, the Euler characteristic for tree graphs. This result suggests that, similarly,  also the quantum field theoretic path integral, being a Fourier transform, may be reducible to an underlying purely algebraic structure.  

\end{abstract}

\maketitle

\section{Introduction}
Legendre transforms play important r\^oles in quantum field theory (QFT). The Legendre transform of the quantum effective action is the generating functional of connected graphs, and the Legendre transform of the classical action is the generating functional of tree graphs. That it is the Legendre transform which provides these maps is nontrivial to show. The conventional explanation appeals to the relationship of the Legendre transform to the stationary phase and steepest descent approximations of the occurring integrals, see, e.g., \cite{WeinVol2}.

Interestingly, however, in perturbative QFT the condition for the Legendre transform to apply, namely the convexity of the function to be transformed, is generally not met: even after ultraviolet divergencies have been renormalized, the small-coupling perturbative expansions in QFTs generally lead to divergent expressions, instead of leading to well-defined convex generating functionals. 

As is well known, this problem is deep and arises already in the simple case of the euclidean path integral of scalar QFT, even when reduced to just one degree of freedom: the one-dimensional ordinary integral
$\tilde{Z}[J]:=\int_{-\infty}^\infty e^{-\phi^2-\lambda \phi^4 +J\phi}d\phi$
is finite for all $\lambda\ge 0$ but divergent for all
$\lambda< 0$. Therefore, $\tilde{Z}[J]$, when expanded in powers of $\lambda$ about $\lambda=0$, has a radius of convergence of zero. Nevertheless, in perturbative QFT one constructs just such a perturbative expansion in $\lambda$, namely by 
writing $\tilde{Z}[J]=\int_{-\infty}^\infty e^{-\lambda \partial^4_J}~e^{-\phi^2+J\phi}d\phi$ and then pulling the derivatives out of the integral: ${Z}[J]:=e^{-\lambda \partial^4_J}\int_{-\infty}^\infty e^{-\phi^2+J\phi}d\phi$. The reason why such an expansion can be useful in QFT is that they are usually asymptotic series, i.e., the partial sums approach the true value of $\tilde{Z}[J]$ to some extent before eventually diverging. 

Fundamentally, however, since ${Z}[J]$ is divergent, ${Z}[J]$ is neither convex nor is it even a function or functional. 
Consequently, the perturbatively-defined $iW[J]:= \log{{Z}[J]}$, which in QFT is the generating function of the connected graphs, is not a well-defined function either. But $W[J]$ would have to be a well-defined and also convex function in order to be able to take its Legendre transform to arrive at the effective action. 

The fact that the physical predictions of perturbative quantum field theory are very successful indicates that the equations of the Legendre transform nevertheless do hold true. The reason why they hold true cannot be analytic, however. We take this to indicate that there is a deeper reason, some basic mechanism, to be discovered. Our aim here is to find this underlying reason for why the equations of the Legendre transforms in QFT hold true. 

We need to ensure first that all the quantities in question are well defined. To this end, we begin by showing that the Legendre transform need not be viewed as a map from functions into functions or functionals into functionals. Instead, the Legendre transform can be viewed as mapping the coefficients of one formal power series into the coefficients of another formal power series. Here, the term ``formal" does not express ``mathematically non-rigorous", as it often does in the physics literature. Instead, the term ``formal power series" is here a technical mathematical term, meaning a power series in indeterminates. Formal power series are not functions. A priori, formal power series merely obey the axioms of a ring and questions of convergence do not arise. Formal power series are a key tool, for example, in the field of combinatorial enumeration \cite{jackson}.  

The use of formal power series allows us to construct a purely algebraic Legendre transform for which we  prove rigorously that it transforms the quantum effective action into the generating functional of connected graphs, and that it transforms the classical action into the generating functional of tree graphs. We will thereby discover the underlying reason - which is completely robust against analytic issues - for why the equations of the Legendre transform hold true in perturbative QFT: the Legendre transform, properly defined as a map between the formal power series, ultimately expresses the Euler characteristic of tree graphs.  
We will here build on our previously unpublished preliminary preprint \cite{JKM1}. 
Other frameworks from combinatorics, like the theory of species \cite{BookSpecies} have been used in \cite{lerouxbrydges} and \cite{faris} to also analyze the Legendre and Fourier transforms in QFT.

\section{Setup and Notation}

\noindent At the heart of the path integral formulation of quantum field
theory is the integral over fields. For example, for scalar fields on flat space:
 \begin{equation} \tilde{Z}[J] = 
\int e^{i S[\Phi] +i \int J\Phi~d^rx} D[\Phi] ~\label{one}
\end{equation}
Notice that $\tilde{Z}[J]$ is the Fourier transform of $e^{iS[\Phi]}$, e.g. see \cite[\S 3.4]{ConnesMarcolli}.
Here, $S$ is the classical action, $\Phi$ and $J$ stand for a real bosonic field and its source field, which we assume to be elements of same Hilbert space of square-integrable functions.
We choose the units such that $c=\hbar=1$. We assume suitable ultraviolet and infrared
cutoffs so that the space of fields, equipped with the inner product $\langle
J,\Phi \rangle = \int J(x) \Phi(x) d^rx$, is of finite dimension, say $N$.
Choosing an orthonormal basis, $\{b_a\}_{a=1}^N$, in the space of fields, we
have $\Phi = \Phi_a b_a,~J = J_a b_a,~ \langle J,\Phi\rangle =J_a\Phi_a$, and
\begin{equation} S[\Phi] = \sum_{n\ge 2} \frac{1}{n!}
S^{(n)}_{a_1,...,a_n}\Phi_{a_1}\cdots \Phi_{a_n},~ \label{actionexpansion}
\end{equation}
where twice occurring indices are to be summed over. We
assume that $S$ has no linear term and that $S^{(2)}$ contains a
convergence-inducing Feynman $i\epsilon$ term. The regularized Fourier
transform finally reads: $\tilde{Z}_r[J] = \int_{\mathbb{R}^N} e^{iS[\Phi]
+iJ_a\Phi_a} \prod_j d\Phi_j$. 

The small-coupling perturbative approach to evaluating $\tilde{Z}_r[J]$ is to 
pull the interaction terms before the integral by using derivatives,
completing the squares and carrying out the integrations, to obtain:
\[ Z_r[J] = \mu'\exp\left({\sum_{n>2}
\frac{i}{n!}S^{(n)}_{a_1,...,a_n}\partial_{(iJ_{a_1})}\cdots
\partial_{(iJ_{a_n})}}\right)\cdot \exp\left({(iJ_b)
\frac{i}{2}{S_{bc}^{(2)}}^{-1}(iJ_{c})}\right)\] 
The calculation of these partial sums then proceeds by viewing $Z[J]$ (we are
now dropping the subscript ``$r$") as the generating functional of all Feynman
graphs $\g$ built from the Feynman rules $edge= i (S^{(2)})^{-1}$, and
$n$-$vertex = iS^{(n)}$, with at least one edge. One can view $Z[J]$ also as a
sum of all graphs with the additional Feynman rule $1$-$vertex= iJ_a$, where
each graph $\g$ has a symmetry factor $\omega(\g)=2^{-\ell}k^{-1}$. Here,
$\ell$ is the number of edges of $\g$ joining a vertex with itself and $k$ is
the number of automorphisms of $\g$. Note that if $\g$ is a tree graph with
labelled ends (\it i.e., \rm no unlabelled $1$-vertices) then $\omega(\g)=1$.


Correspondingly, let us denote the sum of only the connected graphs by $iW[J]$.
When exponentiated, it yields the sum of all graphs, {\it i.e.,}
$\exp(iW[J])=Z[J]$, as can be shown combinatorially, see e.g., \cite{WeinVol2}. Notice that in this way, second quantization beautifully extends the sum-over-all-paths picture of first quantization to include the sum over all Feynman graphs. 

$Z[J]$ is divergent and only its first few partial sums in $\lambda$ are
useful. Similarly, $W[J]$ is divergent. Disregarding this fact, one usually proceeds as if $W[J]$ were a convex function of $J$. This would mean that $\varphi_a=\partial W[J]/\partial J_a$ is invertible to obtain 
$J[\varphi]_a=(J[\varphi])_a$. The Legendre transform of $W[J]$ then yields 
$ 
\Gamma[\varphi] = -  J[\varphi]_a\varphi_a + W[J[\varphi]]
$, 
. The nontrivial claim is that $i\Gamma[\varphi]$ is the generating functional of the sum of
$n$-point $1$-particle irreducible (1PI) graphs for $n>2$, and $i\Gamma^{(2)} =
iS^{(2)} +$ $\sum$ ($2$-point 1PI graphs). The effective action $\Gamma[\varphi]$ was introduced perturbatively in \cite{GSW}. The non-perturbative definition through the Legendre transform is attributed independently to B. DeWitt \cite{dewitt} and Jona-Lasinio \cite{jonadewitt}; see \cite[Ch. 16]{WeinVol2}. \rm One thus obtains, overall:
\begin{equation}
\xymatrixcolsep{3pc} \xymatrix{ ~e^{iS[\Phi]~} \ar[r]^{~\rm{Fourier}} & ~Z[J]~
\ar[l]\ar[r]^{\log/\exp}  &  ~iW[J]~ \ar[l]\ar[r]^{\rm{Legendre}} & \ar[l] ~
i\Gamma[\varphi]~. }     \label{steps}
\end{equation}
Among these three key steps in quantum field theory, the first step, the Fourier transform, suffers from the fact that the path integral is difficult to define analytically. In contrast, the second step, the taking of the logarithm, is mathematically well defined and fully understood: the fact that the exponentiation of the generating series of connected graphs yields the generating series of all graphs can be derived on purely algebraic grounds, see, e.g., \cite[Ch. 16]{WeinVol2}. Our concern here is the third step, the Legendre transform. 
It too suffers from not being well defined analytically. Our aim is to re-define that Legendre transform purely algebraically so that we can then prove the third step rigorously. As mentioned in the introduction, we will thereby find the reason why the third step is robust against analytic issues. The reason is that this Legendre transform is ultimately expressing a purely algebraic concept: the Euler relation for tree graphs.


\section{The two r\^oles of the Legendre transform in QFT}

The Legendre transform plays two important r\^oles in quantum field theory. As described in (\ref{steps}), it maps between the generating functional of connected graphs and the quantum effective action. But the Legendre transform also maps between the generating functional of tree graphs and the classical action. Logically, the two r\^oles are equivalent because, as we now briefly review, the quantum effective action can be viewed as an action and the connected graphs can be viewed as tree graphs made of the Feynman rules derived from the effective action. The equivalence of the two r\^oles of the Legendre transform will be useful in the subsequent section because it will allow us to prove one r\^ole with the other then being implied. 

\it The `classical' r\^ole of the Legendre transform in QFT. \rm Assume $T[K]$ is the generating functional of all tree graphs that can be built from the Feynman rules of an action, $F[\Psi]$. 
Assume that the power series $T[K]$ converges to a function which is convex, so that 
the definition $\Psi_a=\partial T[K]/\partial K_a$ is invertible, to obtain $K[\Psi]_a$. It is known that, under these conditions, the Legendre transform of $T[K]$ is the action $F[\Psi]$: 
\begin{equation}
F[\Psi] = -K[\Psi]_a\Psi_a + T[K[\Psi]]. \label{2nd}
\end{equation}
This result is useful, for example, because the perturbative solution to the classical equations
of motion of the field theory can be obtained from the generating functional of  tree graphs: consider the action, $F[\Psi]+\int K\Psi d^rx$, of a
classical system coupled linearly to a source field, i.e., a driving
force, $K$. The equations of motion, $\delta F/\delta\Psi = -K$, are
to be solved for the field $\Psi[K]$ as a functional of the applied
source $K$. From (\ref{2nd}), the inverse Legendre
transform $T[K]=F[\Psi]+\int K\Psi d^rx$ of $F[\Psi]$ then yields the
generating functional, $T[K]$, of trees. From $\Psi[K]= \delta T[K]/\delta K$ then follows that $T[K]$, once differentiated
by $K$, yields the perturbative solution, $\Psi[K]$, to the classical equations
of motion in powers of the perturbing source field $K$.

Below, we will develop a new Legendre transform of the form of (\ref{2nd}) which is much more robust in the sense that it does not require assumptions of convergence and convexity.

\it The `quantum' r\^ole of the Legendre transform in QFT. \rm This is the third step in (\ref{steps}). Any connected
graph can be viewed as consisting of so-called maximal one-particle irreducible (1PI) subgraphs that are connected
by edges whose deletion would disconnect the graph. For practical calculations
of Feynman graphs, this conveniently identifies the 1PI graphs as building
blocks:  After renormalizing them, the 1PI graphs can be glued together to form
connected graphs with no further loop integrations needed. 

Every connected graph is, therefore, a tree graph made out of a new set of Feynman rules: Any connected graph is a tree graph whose vertices are $1$PI graphs connected by strings of edges and $2$-point $1$PI graphs. Thus, the generating functional of connected
graphs, $iW[J]$, can be viewed as the generating functional of all tree graphs made from new Feynman
rules where the $n$-vertex is the sum of all $n$-point $1$PI
graphs, while the edge of the new Feynman rules is given by \begin{equation}
-~+~-\!\!\bigcirc\!\!-~+~-\!\!\bigcirc\!\!-\!\!\bigcirc\!\!-
~+~-\!\!\bigcirc\!\!-\!\!\bigcirc\!\!-\!\!\bigcirc\!\!- ~+~ \dots
~~~=~~~\left((-)^{-1} + \bigcirc\right)^{-1}\end{equation} Here, $\bigcirc$ is the sum of
$2$-point $1$PI graphs and we summed a geometric series. 

By definition, the action which induces these Feynman rules is the quantum effective action, $\Gamma[\varphi]$, see, e.g., \cite{dominicis,GSW,dewitt,jonadewitt,vasiletal}. Therefore, the quantum effective action is that action
which when treated classically (i.e., only to tree level) yields the correct quantum theoretic
answer: on one hand, any $n$-point function can be calculated from first principles as the corresponding sum of all
connected graphs using the Feynman rules of the action $S$. On the other hand, if the quantum effective action is known, then the same $n$-point function can also be calculated as if one solved a classical problem, namely by summing the corresponding tree graphs made out of the Feynman rules generated by the quantum effective action $\Gamma[\varphi]$, see, e.g., \cite{WeinVol2}.

Considering (\ref{2nd}), this means that the quantum effective action is the Legendre transform of the generating functional of connected graphs $W[J]$, if the conditions for the Legendre transform to be applicable are fulfilled. However, within the conventional approach, since $W[J]$ is divergent, it is not a convex functional and the prerequisites for applying the Legendre transform are not met.


\section{Main results}  
Let us now develop an algebraic Legendre transform for which (\ref{2nd}) holds without any analytic prerequistites. This then allows us to apply (\ref{2nd}) to rigorously prove that the algebraic Legendre transform maps between actions and generating series of tree graphs, including what is then the special case of the Legendre transform mapping between the quantum effective action and the generating series of the connected graphs.  

To this end, let us redefine $S,Z,W,\Gamma, T$ and $F$ as elements in a ring of
formal power series. All physically relevant
information is thereby viewed as encoded in the coefficients of the various powers. The question of the convergence of the series does not arise.

We can, therefore, define a \it combinatorial Legendre transform: \rm Assume given a formal power series $F[\Psi]$. We then define its algebraic Legendre transform through the following algebraic procedure: view $F[\Psi]$ as an action, read off the Feynman rules and then obtain $T[K]$ as the formal power series generating all tree graphs. Here, $F[\Psi]$ can be any arbitrary formal power series, subject to the condition that it has no constant or linear term and that the coefficient of its quadratic term must be invertible. This ensures the absence of 0- and 1- vertices and that the Feynman rule for the edge exists. The new combinatorial Legendre transform, which does not require analytic assumptions, obeys the key equation of the conventional analytic Legendre transform, as we first found in our preliminary preprint \cite{JKM1}:

To state our main theorem we need the following notation. Let $F[\Psi]  =  \sum_a  \frac{1}{2}  F_{a,a}^{(2)}  \Psi_a^2 + \sum_{n\ge 3} \frac{1}{n!} 
F^{(n)}_{a_1,...,a_n}\Psi_{a_1}\cdots \Psi_{a_n}$ be an element in a ring of
formal power series with the indeterminates $\Psi_a$ assumed commutative. Further assume that its 
coefficient matrix {$F^{(2)}$} is invertible. (An example is $S$ in (\ref{actionexpansion})). We view
 $F[\Psi]$ as an
action that defines Feynman rules. The corresponding tree graphs will be generated by a formal
power series, in variables $K_a$, which we denote by $T[K]$.

\begin{thm}
\label{thm:main}
Given a formal power series $F[\Psi]$ as above and the corresponding series of trees $T[K]$, we  relate the two indeterminates $K$ and $\Psi$ by defining   $K$ to be the formal power series  
$K[\Psi]_a ~:=~ -\partial F[\Psi]/\partial\Psi_a$. 
Then, the formal
power series $T[K]$ is the Legendre transform of the formal power series $F[\Psi]$ in the sense that these power series obey the Legendre transform equation
\begin{equation}
T[K] = K_a\Psi[K]_a  + F[\Psi[K]]. \label{toshow}
\end{equation}
\end{thm}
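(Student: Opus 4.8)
The plan is to turn both sides of (\ref{toshow}) into weighted sums over tree graphs and match them graph by graph, whereupon the equality collapses to the Euler relation $V-E=1$ for trees. First I would rewrite the right-hand side algebraically. Since $K[\Psi]_a=-\partial F[\Psi]/\partial\Psi_a$ and the homogeneous components of $F[\Psi]=\sum_{n\ge2}\frac{1}{n!}F^{(n)}_{a_1,\dots,a_n}\Psi_{a_1}\cdots\Psi_{a_n}$ are eigenvectors of the grading operator $\Psi_a\,\partial/\partial\Psi_a$, one has $K[\Psi]_a\Psi_a=-\sum_{n\ge2}\frac{n}{n!}F^{(n)}_{a_1,\dots,a_n}\Psi_{a_1}\cdots\Psi_{a_n}$, so that
\begin{equation}
K_a\Psi[K]_a+F[\Psi[K]]=\sum_{n\ge2}\Big(\tfrac{1}{n!}-\tfrac{1}{(n-1)!}\Big)\,F^{(n)}_{a_1,\dots,a_n}\,\Psi[K]_{a_1}\cdots\Psi[K]_{a_n}, \label{reduction}
\end{equation}
all evaluated at the unique formal power series $\Psi[K]$ with no constant term obeying $K[\Psi[K]]=K$; this inverse exists because $K[\Psi]_a=-F^{(2)}_{a,b}\Psi_b+O(\Psi^2)$ and $F^{(2)}$ is invertible.

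Next I would identify $\Psi[K]_a$ combinatorially. Rearranging $\partial F[\Psi]/\partial\Psi_a=-K_a$ gives the fixed-point equation $\Psi_a=-(F^{(2)})^{-1}_{a,b}\big(K_b+\sum_{n\ge3}\frac{1}{(n-1)!}F^{(n)}_{b,c_2,\dots,c_n}\Psi_{c_2}\cdots\Psi_{c_n}\big)$, whose iterated solution is exactly the generating series of trees built from the Feynman rules of $F$ and carrying one distinguished (amputated) external leg of index $a$; equivalently $\Psi[K]_a=\partial T[K]/\partial K_a$. This is the familiar tree expansion of the perturbative solution of the classical equation of motion, and it fixes the sign conventions of the Feynman rules (propagator $-(F^{(2)})^{-1}$, $n$-vertex $F^{(n)}$, external leg $K_a$).

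Substituting this tree expansion into (\ref{reduction}): for $n\ge3$ the product $F^{(n)}_{a_1,\dots,a_n}\Psi[K]_{a_1}\cdots\Psi[K]_{a_n}$ generates trees with a distinguished $n$-valent internal vertex, while for $n=2$ the contraction $F^{(2)}_{a,b}\Psi[K]_a\Psi[K]_b$ (in which $F^{(2)}$ cancels one root propagator) generates trees with a distinguished edge. Hence a tree $\tau$ with $V$ internal vertices, $E$ internal edges and $L$ external legs occurs on the right of (\ref{reduction}) --- up to the common weight with which it occurs in $T[K]$ --- with coefficient $\sum_{v}(1-\deg v)+(E+L)$, the sum running over the internal vertices; counting edge-ends at the internal vertices, $\sum_v\deg v=2E+L$, turns this into $V-E$, and Euler's relation $V-E=1$ for the tree obtained from $\tau$ by deleting its external legs reduces it to $1$ --- exactly the coefficient of $\tau$ in $T[K]$. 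Summing over $\tau$ yields (\ref{toshow}); the lone graph with $V=0$, a single edge joining two external legs, is checked by hand and obeys the same count.

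I expect the main obstacle to be the bookkeeping in the dictionary between coefficients of formal power series and weighted sums over graphs: checking that the quadratic term of $F$ yields a distinguished \emph{edge} rather than a spurious $2$-valent vertex, that the signs from the $\big(\tfrac{1}{n!}-\tfrac{1}{(n-1)!}\big)$ factors and from the $\Psi[K]$'s combine into a single overall sign per graph matching that in $T[K]$, and that automorphisms of $\tau$ (notably isomorphic subtrees) are neither over- nor under-counted. Once that is secured the homological input $V-E=1$ is immediate. As a cross-check, having established $\Psi[K]_a=\partial T[K]/\partial K_a$ one can reprove (\ref{toshow}) by differentiating both sides in $K_a$: the chain rule together with $\partial F[\Psi]/\partial\Psi_a$ evaluated at $\Psi=\Psi[K]$ equaling $-K_a$ shows both sides have $K_a$-derivative $\Psi[K]_a$, and they agree at $K=0$ since $T[0]=0$ and $\Psi[0]=0$ --- though this shorter route hides the Euler-characteristic mechanism the first one makes plain.
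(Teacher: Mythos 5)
Your proposal is correct and follows essentially the same route as the paper: identify $\Psi[K]_a$ with the rooted-tree series $\partial T[K]/\partial K_a$, expand the right-hand side of \eqref{toshow} as a weighted sum over trees, and observe that each tree occurs with multiplicity $V-E=1$. The only difference is bookkeeping: you absorb $K_a\Psi_a$ into the $F^{(n)}$-sum via the grading identity, arriving at the per-tree weight $\sum_v(1-\deg v)+(E+L)$, whereas the paper keeps $K_a\Psi_a$ as a separate $1$-vertex contribution and gets $V(\g)-E(\g)$ directly — the two counts are identical.
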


\begin{rem} \label{rem:condLT}
\rm 
Note that the composition $F[\Psi[K]]$ is well-defined as a formal power series since
$\Psi[K]$ has no constant term. 
Further, the algebraic derivative $\partial F[\Psi]/\partial\Psi_a$ in the statement of the theorem is a well-defined operation in the ring, so that
$K[\Psi]_a$ is a formal power series in the $\Psi_a$.
We assume that the compositional inverse, the formal power series $\Psi[K]_a$, exists. For the univariate case, this is equivalent to the linear term in the formal power series $K[\Psi]_a$ being invertible. For the multivariate case, it is sufficient if $K[\Psi]_a$ is of the form $\Psi_a\cdot G$ for a multivariate series $G$ with invertible constant term. For more details on the conditions for (multivariate) Lagrange inversion, see \cite[\S 1.1.4, \S 1.2.9]{jackson}. 
\end{rem}

We emphasize that here, unlike for the analytic Legendre transform, for this algebraically-defined Legendre transform it is not necessary that the formal power series $F[\Psi]$ converges to a convex function or that it converges at all.

In the next section we will give examples of Theorem~\ref{thm:main} for {\em univariate} actions $F(x)= - x^2/2 + \sum_{n\geq 3} F^{(n)} x^n/n!$ and formal power series of trees $T(y)=-y^2/2 + \sum_{n\geq 3} T^{(n)} y^n/n!$ based on known examples of trees from enumerative combinatorics. To this end, we now give an explicit formula for $T(y)$ in terms of the formal parameters $F^{(n)}$ for each $n$-vertex. The examples that follow can be obtained by doing different evaluations of these parameters. For examples of formal power series of all graphs of similar univariate actions in the context of asymptotic expansions for large coefficients see \cite[\S 2]{Borinsky}.

\begin{cor} \label{cor:main}
Let $F(x)= - x^2/2 + \sum_{n\geq 3} F^{(n)} x^n/n!$ be a formal power series and $T(y)$ be its Legendre transform then $T(y)=-y^2/2 + \sum_{n\geq 3} T^{(n)} y^n/n!$ where $T^{(n)}$ for $n\geq 3$ is 
\begin{equation} \label{eq:degist}
T^{(n)} = \sum_{n_3,n_4,\ldots,n_k} \frac{(n-2+\sum_{j=3}^k n_j)!}{\prod_{j=3}^k n_j!} \prod_{j= 3}^k \left(\frac{F^{(j)}}{(j-1)!}\right)^{n_j},
\end{equation}
where the sum is over all finite tuples $(n_3,\ldots,n_k)$ of nonnegative integers satisfying the Euler relation $\sum_{j=3}^k (j-2)n_j = n-2$.
\end{cor}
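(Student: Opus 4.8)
The plan is to derive the explicit formula \eqref{eq:degist} directly from Theorem~\ref{thm:main} by specializing to the univariate case and then extracting the tree-sum combinatorially. First I would note that in one dimension the quadratic coefficient is $F^{(2)} = -1$, so the edge Feynman rule is $(F^{(2)})^{-1} = -1$ and the $j$-vertex carries the weight $F^{(j)}$. By the definition of the combinatorial Legendre transform, $T(y)$ is the formal power series generating all tree graphs built from these rules, with a $1$-vertex (the external leg) carrying $y$. Since every such tree has symmetry factor $\omega(\g)=1$ once its ends are labelled (as remarked after \eqref{steps}), $T^{(n)}$ is obtained by counting labelled rooted tree structures: attach $n$ labelled external legs, and the remaining internal structure is an unrooted tree on internal vertices of degrees $3,4,\ldots$ whose remaining half-edges are exactly the $n$ external legs.

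The key combinatorial step is to convert the sum over tree topologies with fixed vertex-degree profile $(n_3,n_4,\ldots,n_k)$ into the closed product in \eqref{eq:degist}. Let $V=\sum_{j\ge 3} n_j$ be the number of internal vertices; the number of edges of the internal tree is $V-1$, and each internal edge has two half-edges while each external leg is one half-edge, so total half-edge count gives $\sum_j j\,n_j = 2(V-1)+n$, which rearranges to the Euler relation $\sum_{j\ge 3}(j-2)n_j = n-2$ stated in the corollary. To count the weighted trees with this profile I would use the standard formula for the number of labelled trees with prescribed degrees — i.e., the multinomial $\binom{V-2}{\,n_3-1, n_4-1,\ldots}$-type generalization, more precisely that the number of trees on $V$ labelled vertices with degree sequence $d_1,\ldots,d_V$ is $\binom{V-2}{d_1-1,\ldots,d_V-1}$ — combined with the bookkeeping for distributing the $n$ labelled external legs among the available half-edge slots and the $1/n_j!$ for unlabelled internal vertices of each type. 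Assembling these factors, the symmetry factors $1/(j-1)!$ attached to each $j$-vertex (coming from the $1/j!$ in $F$ together with the leg already used to connect into the tree) produce exactly $\bigl(F^{(j)}/(j-1)!\bigr)^{n_j}$, and the various multinomial coefficients collapse to $(n-2+\sum_j n_j)!/\prod_j n_j!$ after simplification using $V-1$ edges and the Euler relation.

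The main obstacle I anticipate is the bookkeeping in the second step: correctly accounting for all the labelling and symmetry factors so that the many binomial and factorial contributions — from labelled external legs, from Cayley-type counts of trees with given degrees, from the $1/n_j!$ over indistinguishable internal vertices of each valence, and from the $1/j!$ in the action expansion — telescope to the clean expression $(n-2+\sum_j n_j)!/\prod_j n_j!$ times $\prod_j (F^{(j)}/(j-1)!)^{n_j}$. An efficient way to sidestep some of this is to obtain \eqref{eq:degist} instead via Lagrange inversion applied to the Legendre relation: from $K(\Psi) = -F'(\Psi) = \Psi - \sum_{j\ge 3} F^{(j)}\Psi^{j-1}/(j-1)!$ one computes $\Psi(K)$ by the Lagrange–Bürmann formula, and then $T'(K) = \Psi(K)$ (which follows by differentiating \eqref{toshow} and using $K = -F'(\Psi)$), so that $[y^{n-1}]\,\Psi(y) = T^{(n)}/(n-1)!$; expanding the $(n-1)$-st power of the appropriate series via the multinomial theorem then yields \eqref{eq:degist} directly, with the factorial $(n-2+\sum_j n_j)!$ emerging as the Lagrange-inversion prefactor $\binom{n-2+\sum n_j}{\,n-1,n_3,\ldots}$ suitably normalized. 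I would present the Lagrange-inversion route as the primary argument and remark that it is equivalent to the direct tree enumeration, since both are consequences of Theorem~\ref{thm:main}.
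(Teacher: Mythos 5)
Your proposal is correct, and both routes you sketch do reach \eqref{eq:degist}; they differ from the paper's two proofs in instructive ways. Your primary route, Lagrange--B\"urmann applied to $K(\Psi)=\Psi-\sum_{j\ge 3}F^{(j)}\Psi^{j-1}/(j-1)!$ followed by formal integration via $\partial T/\partial y = x(y)$, is in spirit the paper's algebraic proof: the paper packages the same inversion as Loday's tree-counting formula for compositional inverses (Lemmas~\ref{lem:Loday} and \ref{lem:numBTrees}), whereas you expand $(1-g(x))^{-(n-1)}$ by the negative-binomial and multinomial theorems; the computation indeed telescopes, since $(n-1)!\cdot\frac{1}{n-1}\binom{n-2+m}{m}\binom{m}{n_3,\ldots,n_k}=(n-2+m)!/\prod_j n_j!$ with $m=\sum_j n_j$. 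Your secondary, combinatorial route is genuinely different from the paper's: the paper proves the key enumeration (Lemma~\ref{lem:numATrees}) by double-counting plane rooted trees against $B$-trees, while you invoke the Pr\"ufer-type formula that the number of labelled trees with degree sequence $d_1,\ldots,d_V$ is $\binom{V-2}{d_1-1,\ldots,d_V-1}$; applied to all $n+\sum_j n_j$ vertices (leaves contribute $0!$) and divided by $\prod_j n_j!$ to unlabel the internal vertices (exact, because a tree automorphism fixing all labelled leaves is the identity), this gives Lemma~\ref{lem:numATrees} in one line and is arguably more direct. Note that your worry about ``distributing external legs among half-edge slots'' evaporates if you apply the degree-sequence formula to the full vertex set including the leaves.

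Two small slips to fix. First, with $F^{(2)}=-1$ the edge Feynman rule is $-(F^{(2)})^{-1}=+1$, not $(F^{(2)})^{-1}=-1$; your stated convention would introduce a spurious $(-1)^{E(\g)}$ that contradicts, e.g., the positivity of $T_3^{(n)}=(2n-5)!!$ in Example~\ref{ex:triv} (your subsequent accounting silently uses the correct $+1$, and the Lagrange route is unaffected). Second, the Lagrange-inversion prefactor is $\frac{1}{\,n-1+\sum_j n_j\,}\binom{n-1+\sum_j n_j}{\,n-1,\,n_3,\ldots,n_k\,}$ rather than $\binom{n-2+\sum_j n_j}{\,n-1,\,n_3,\ldots}$ as written (the latter multinomial is not even well-formed, since its lower entries sum to $n-1+\sum_j n_j$); with the correct prefactor the factor $(n-1)!$ from the basis $y^n/n!$ cancels as required.
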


In Section~\ref{sec:proofs}, we give two proofs of this corollary of Theorem~\ref{thm:main}. The first proof uses known explicit formulas for the compositional inverse of ordinary formal power series. The second proof is combinatorial and involves counting the trees directly.

\begin{rem}
\rm Regarding the computational complexity of \eqref{eq:degist}, we notice that the number of terms in \eqref{eq:degist} equals $p(n-2)$ where $p(n)$ is the well-known number of integer partitions of $n$, see \cite[\href{https://oeis.org/A000041}{A000041}]{oeis} ($n_j$ for $j\geq 3$ is the number of parts $j-2$ in the partition). The number $p(n)$ of integer partitions of $n$ has the generating function
$\sum_{n\geq 0}^{\infty} p(n) x^n = \prod_{j=1}^{\infty} 1/(1-x^j)$,
and by the  Hardy--Ramanujan formula $p(n) \sim \frac{1}{4\sqrt{3}n} e^{\pi\sqrt{2n/3}}$ as $n\to \infty$. 
\end{rem}

For the univariate case we denote the algebraic Legendre transform that sends $F(x)$ to $T(y)$  by $(\mathbb{L}\, F)(y)$. Next, we state some properties of $\mathbb{L}$. There are analogous properties for the multivariate case.

\begin{prop} \label{prop:propsLT}
Let $T(y)$ be the Legendre transform of $F(x)$. Then
\begin{itemize}
\item[(i)] $\partial T(y)/\partial_y = x(y)$.
\item[(ii)] $T(y)$ also has a Legendre transform.
\item[(iii)] The Legendre transform $\mathbb{L}: F(x) \mapsto T(y)$ is a quasi-involution, i.e. $(\mathbb{L}^2 F)(-x) = F(x)$. 
\end{itemize}
\end{prop}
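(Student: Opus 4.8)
\emph{Proof proposal.} The plan is to reduce all three items to formal power series calculus applied to the single closed form supplied by the univariate case of Theorem~\ref{thm:main}. Write $K(x) := -\partial F(x)/\partial x$, which by hypothesis has invertible linear term $-F^{(2)}x$, and let $x(y)$ denote its compositional inverse, so that $K(x(y)) = y$ and $x(K(x)) = x$ hold as identities of formal power series; then $T(y) = y\,x(y) + F(x(y))$. Every statement below follows by formally differentiating or composing this identity, using only that the chain and product rules hold in the ring of formal power series and that the compositions in question are legitimate because all series that get substituted (namely $x(y)$, and later $y(z)$) have zero constant term (cf.\ Remark~\ref{rem:condLT}).

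For (i), differentiate $T(y) = y\,x(y) + F(x(y))$ with respect to $y$, obtaining $T'(y) = x(y) + \bigl(y + F'(x(y))\bigr)\,x'(y)$. The bracketed factor vanishes identically because $F'(x(y)) = -K(x(y)) = -y$, so $T'(y) = x(y)$, which is (i). This is just the usual Legendre symmetry $\partial T/\partial K = \Psi$ dual to $\partial F/\partial\Psi = -K$, now read off at the level of coefficients.

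For (ii), I must check that $T(y)$ itself satisfies the hypotheses under which $\mathbb{L}$ (equivalently Theorem~\ref{thm:main}) applies: no constant or linear term, invertible quadratic coefficient, and the existence of a compositional inverse for $-\partial T/\partial y$ (Remark~\ref{rem:condLT}). Since $x(y) = -(F^{(2)})^{-1}y + O(y^2)$, the identity $T(y) = y\,x(y) + F(x(y))$ gives $T(y) = -\tfrac12 (F^{(2)})^{-1}y^2 + O(y^3)$; thus $T$ has no constant or linear term, and its quadratic coefficient $-(F^{(2)})^{-1}$ is invertible (with inverse $-F^{(2)}$). Moreover, by (i), $-T'(y) = -x(y)$ has linear term $(F^{(2)})^{-1}y$ with invertible coefficient, so its compositional inverse exists in the ring. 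Hence $\mathbb{L}T$ is well defined. Note that this computation also records that $\mathbb{L}$ sends a quadratic coefficient $c$ to $-c^{-1}$: it inverts \emph{and} negates it, which is precisely the source of the ``quasi'' in part (iii).

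For (iii), write $U := \mathbb{L}T = \mathbb{L}^2 F$. Applying the closed form a second time, $U(z) = z\,y(z) + T(y(z))$, where $y(z)$ is the compositional inverse of $-\partial T/\partial y$. By (i), $-\partial T/\partial y = -x(y)$, and since $x$ is the compositional inverse of $K = -F'$, the compositional inverse of $y \mapsto -x(y)$ is $z \mapsto K(-z) = -F'(-z)$; hence $y(z) = -F'(-z)$. Substituting, $x(y(z)) = x(K(-z)) = -z$, so $T(y(z)) = y(z)\cdot(-z) + F(-z) = z\,F'(-z) + F(-z)$, while $z\,y(z) = -z\,F'(-z)$; adding the two gives $U(z) = F(-z)$. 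Replacing $z$ by $-x$ yields $(\mathbb{L}^2 F)(-x) = F(x)$, which is (iii). The only genuinely delicate point in the whole argument is the sign bookkeeping of (ii)--(iii): one must track that $\mathbb{L}$ negates as well as inverts the quadratic coefficient (so $\mathbb{L}^2$ restores it but twists the odd-degree data by $x\mapsto -x$), and one must confirm at each stage that the series being inverted has invertible linear term so that the compositional inverses actually lie in the ring; once that is in place, (i)--(iii) are short formal manipulations of the identity $T(y) = y\,x(y) + F(x(y))$.
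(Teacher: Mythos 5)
Your proposal is correct and follows essentially the same route as the paper: differentiate the identity $T(y)=y\,x(y)+F(x(y))$ and use $F'(x(y))=-y$ for (i), read off the leading quadratic term and the invertibility of the linear term of $-T'$ for (ii), and substitute $y(z)=K(-z)$ into the iterated transform for (iii). Your version is, if anything, slightly more explicit than the paper's in tracking the quadratic coefficient $F^{(2)}\mapsto -(F^{(2)})^{-1}$ and the sign bookkeeping behind the ``quasi''-involution, but the underlying argument is the same.
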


\begin{proof}
To prove (i) we differentiate $T(y)$ given by \eqref{toshow} using the product rule and the chain rule 
\begin{align*}
\frac{\partial T(y)}{\partial y} &= \frac{\partial}{\partial y} \left( y \cdot x(y) + F(x(y))\right)\\
&= x(y) + y \frac{\partial x(y)}{y} + \frac{\partial F}{\partial x} \cdot \frac{\partial x(y)}{\partial{y}} = x(y),
\end{align*}
The last equality holds since $\partial F/\partial{x} = -y$.

Next we show (ii). Either from \eqref{toshow} or directly counting nontrivial trees, we see that the series $T(y)$ is of the form $y^2/2 + \sum_{n\geq 3} T^{(n)} y^n/n!$. Thus, $T(y)$ has no constant or linear term and it has a nonzero quadratic term. Therefore, $z := -\partial T(y)/\partial_y = y + \sum_{n\geq 2} T^{(n+1)} y^n/n!$ has a compositional inverse. Thus $T(y)$ has a Legendre transform (see Remark~\ref{rem:condLT}).

Lastly, to prove (iii) let $z = -\partial T/\partial{y}$ be the new variable that by (i) equals $-x(y)$. Thus, the Legendre transform of $\mathbb{L} F$ equals 
\begin{align*}
(\mathbb{L}^2 F)(-x) &= -x\cdot y(-x) + T(y(-x))\\
& = -x\cdot y(-x) + \left(x\cdot y(-x) + F(x)\right) = F(x),
\end{align*}
as desired.
\end{proof}

\begin{rem} \label{rem:remembering}
\rm Notice that Proposition \ref{prop:propsLT}~(ii),(iii) implies that the Legendre transform of a power series that is divergent is not only well defined. It is also not forgetful, in the sense that the original power series can be recovered, in fact by Legendre transforming again. 
\end{rem}

\subsection{Examples of Theorem~\ref{thm:main}}

\begin{ex}[tetravalent trees - convex polynomial action] \label{ex:tetraconv}
The action $F_1(x) = -x^2/2 - x^4/4!$ (i.e. $F^{(4)}=-1$ and $F^{(k)}=0$ for other $k$) is a convex function and it, therefore, possesses an analytic Legendre transform. By solving the cubic equation $y=x+x^3/6$ for $x=x(y)$ we obtain
\[
x(y)=\sqrt [3]{3\,y+\sqrt {8+9\,{y}^{2}}}-{\frac {2}{\sqrt [3]{3\,y+
\sqrt {8+9\,{y}^{2}}}}},
\]
and then $T^{\text{a}}_1(y)$ is defined as
\begin{equation} \label{eq:ex1}
T^{\text{a}}_1(y) = y\cdot x(y) - \frac{x(y)^2}{2}-\frac{x(y)^4}{4!}.
\end{equation}

To calculate it combinatorially, we start by noticing that its trees consist of unlabelled $4$-vertices and labelled $1$-vertices weighted by $(-1)^{n_4}$ where $n_4$ is the number of $4$-vertices in the tree. Using \eqref{eq:degist} or using Lagrange inversion one can show that 
\[
T_1(y) = \sum_{k\geq 1} (-1)^{k-1}\frac{(3k-3)!}{(k-1)! 6^{k-1}}\frac{y^{2k}}{(2k)!} = \frac{{y}^{2}}{2}
\cdot {\mbox{$_3\mathsf{F}_2$}\left(\frac{1}{3},\frac{2}{3},1;\,\frac{3}{2},2;-\,{\frac {9}{8}}\,{y}^{2}\right)},
\]
where $_3\mathsf{F}_2(a_1,a_2,a_3; b_1,b_2; x)$ is a generalized hypergeometric function. 
The radius of convergence of $T_1(y)$ is $2\sqrt{2}/3$. By Theorem~\ref{thm:main}, $T_1(y)$ is the combinatorial Legendre transform of $F_1(x)$ and equals the RHS of \eqref{eq:ex1}. Thus $T_1^{\text{a}}(y)=T_1(y)$.
See Figure~\ref{fig:allexs} for a plot of $F_1(x)$, $T_1(y)$, of the coefficients of each of the series and also plots of the convergent power series of the examples below.
\end{ex}

\begin{figure}
\begin{subfigure}[b]{.25\linewidth}
\includegraphics[scale=0.25]{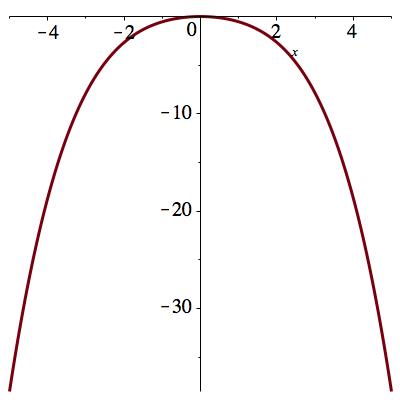}
\caption*{$F_1(x)$}\label{fig:1a}
\end{subfigure}\begin{subfigure}[b]{.25\linewidth}
\centering \includegraphics[scale=0.25]{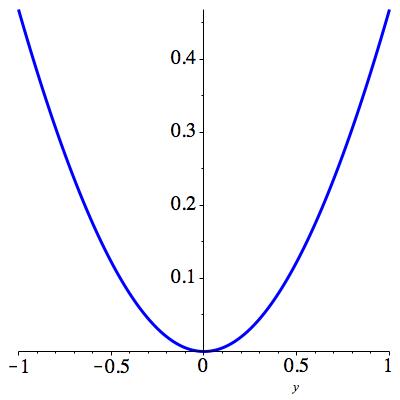}
\caption*{$T_1(y)$}\label{fig:1b}
\end{subfigure}\begin{subfigure}[b]{.25\linewidth}
\centering \includegraphics[scale=0.25]{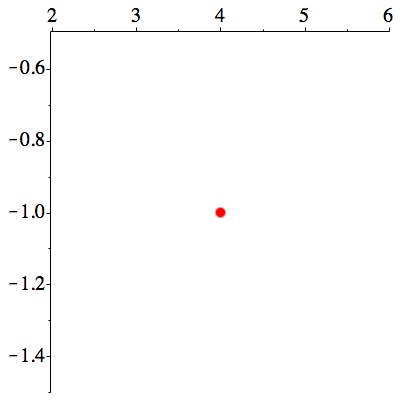}
\caption*{$F_1^{(n)}$}\label{fig:1c}
\end{subfigure}\begin{subfigure}[b]{.25\linewidth}
\centering \includegraphics[scale=0.25]{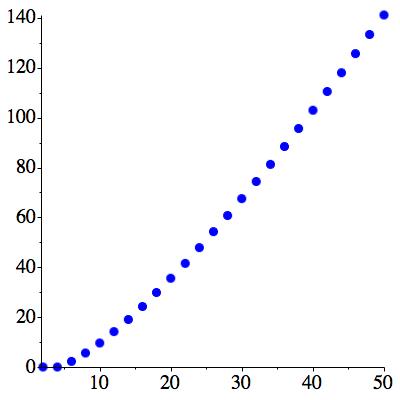}
\caption*{$\ln(T_1^{(n)})$}\label{fig:1d}
\end{subfigure}

\begin{subfigure}[b]{.25\linewidth}
\includegraphics[scale=0.25]{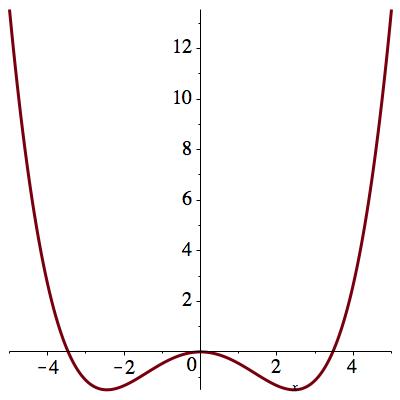}
\caption{$F_2(x)$}\label{fig:2a}
\end{subfigure}\begin{subfigure}[b]{.25\linewidth}
\centering \includegraphics[scale=0.25]{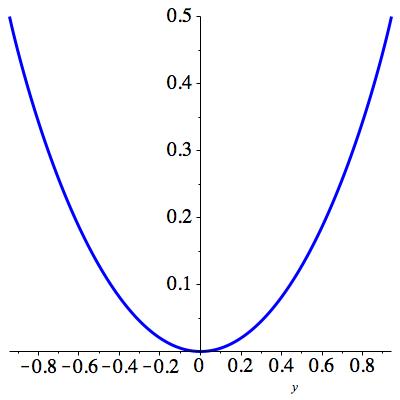}
\caption*{$T_2(y)$}\label{fig:2b}
\end{subfigure}\begin{subfigure}[b]{.25\linewidth}
\centering \includegraphics[scale=0.25]{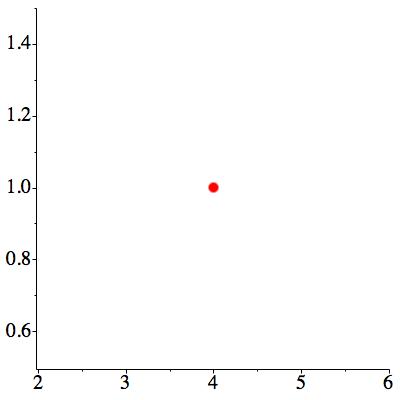}
\caption*{$F_2^{(n)}$}\label{fig:2c}
\end{subfigure}\begin{subfigure}[b]{.25\linewidth}
\centering \includegraphics[scale=0.25]{pt_trans1}
\caption*{$\ln(|T_2^{(n)}|)$}\label{fig:2d}
\end{subfigure}

\begin{subfigure}[b]{.25\linewidth}
\includegraphics[scale=0.25]{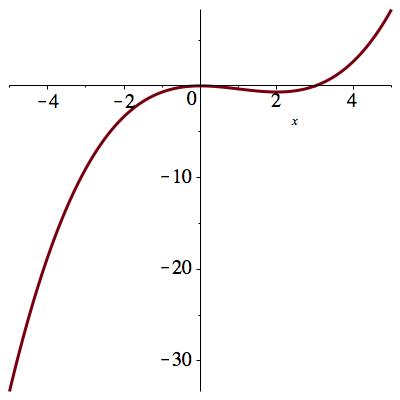}
\caption*{$F_3(x)$}\label{fig:3a}
\end{subfigure}\begin{subfigure}[b]{.25\linewidth}
\centering \includegraphics[scale=0.25]{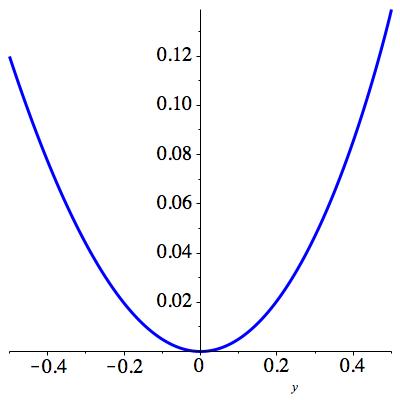}
\caption*{$T_3(y)$}\label{fig:3b}
\end{subfigure}\begin{subfigure}[b]{.25\linewidth}
\centering \includegraphics[scale=0.25]{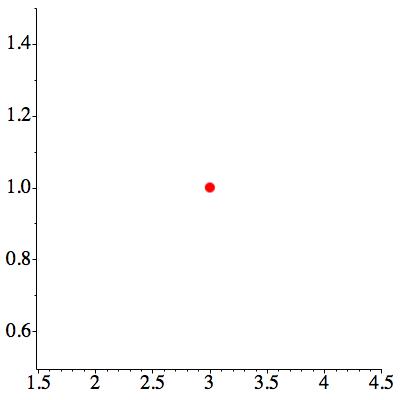}
\caption*{$F_3^{(n)}$}\label{fig:3c}
\end{subfigure}\begin{subfigure}[b]{.25\linewidth}
\centering \includegraphics[scale=0.25]{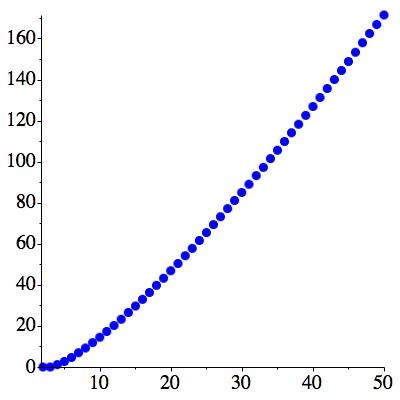}
\caption*{$\ln(T_3^{(n)})$}\label{fig:3d}
\end{subfigure}

\begin{subfigure}[b]{.25\linewidth}
\includegraphics[scale=0.25]{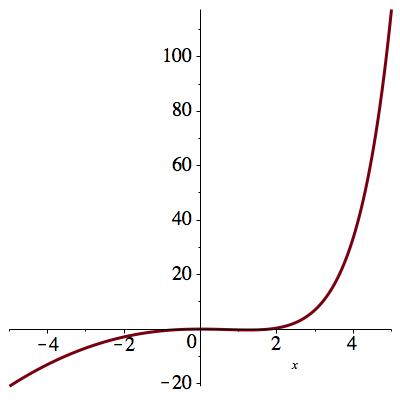}
\caption*{$F_4(x)$}\label{fig:4a}
\end{subfigure}\begin{subfigure}[b]{.25\linewidth}
\centering \includegraphics[scale=0.25]{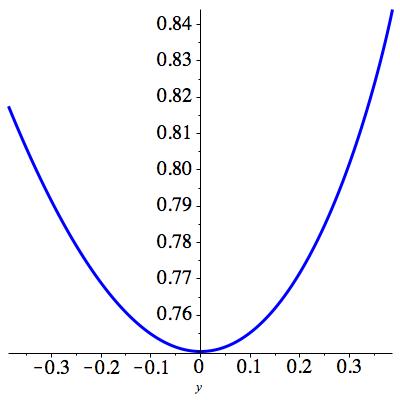}
\caption*{$T_4(y)$}\label{fig:4b}
\end{subfigure}\begin{subfigure}[b]{.25\linewidth}
\centering \includegraphics[scale=0.25]{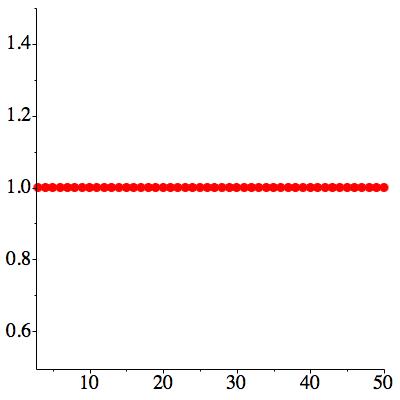}
\caption*{$F_4^{(n)}$}\label{fig:4c}
\end{subfigure}\begin{subfigure}[b]{.25\linewidth}
\centering \includegraphics[scale=0.25]{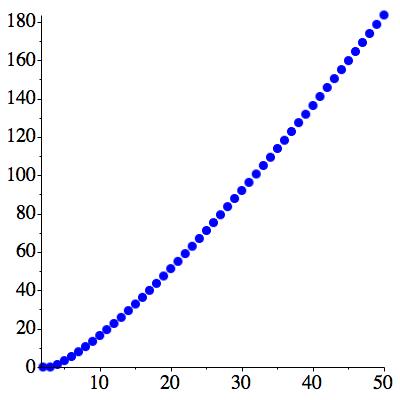}
\caption*{$\ln(T_4^{(n)})$}\label{fig:4d}
\end{subfigure}

\hfill
\begin{subfigure}[b]{.25\linewidth}
\includegraphics[scale=0.25]{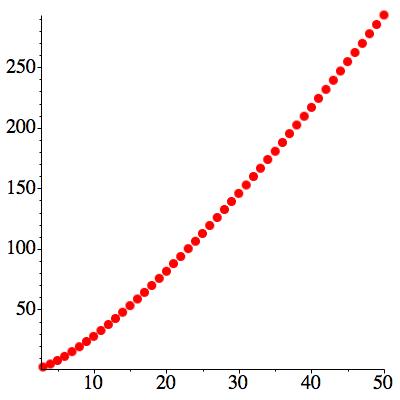}
\caption*{$\ln(F_7^{(n)})$}\label{fig:5c}
\end{subfigure}
\begin{subfigure}[b]{.25\linewidth}
\centering \includegraphics[scale=0.25]{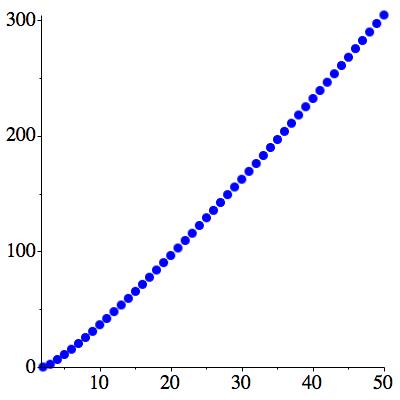}
\caption*{$\ln(T_7^{(n)})$}\label{fig:5d}
\end{subfigure}

\caption{Graphs of the actions $F_i(x)$ and their Legendre transforms $T_i(y)$ (1st and 2nd column), and plots of the ($\ln$ of) coefficients of $F_i(x)$ and $T_i(y)$ in the bases $x^k/k!$ and $y^k/k!$ (3rd and 4th column) for Examples 1, 2, 3, 4, and 7. Example 7 illustrates the important fact that the combinatorial Legendre transform applies even to divergent power series, such as $F_7(x)$ and $T_7(y)$, which cannot be plotted.~~~~~~~~~~~~~~~~~~~~~~~~~~~~~~~~~~~~~~~~~~~~~~~~~~~~~~~~~~~~~~~~~~~~~~~~~~}
\label{fig:allexs}
\end{figure}

\begin{ex}[tetravalent trees - non convex action] 
\label{ex:tetranonconv} 
In contrast to Example~\ref{ex:tetraconv}, the action $F_2(x) = -x^2/2 + x^4/4!$ (i.e. $F^{(4)}=1$ and $F^{(k)}=0$ for other $k$) is not a convex function and, therefore, does not possess an analytic Legendre transform. But we do obtain a unique combinatorial Legendre transform for this action. Its trees consist of unlabelled $4$-vertices and labelled $1$-vertices. Using \eqref{eq:degist} or using Lagrange inversion one can show that 
\[
    T_2(y) = \sum_{k\geq 1} \frac{(3k-3)!}{(k-1)! 6^{k-1}} \frac{y^{2k}}{(2k)!} =\frac{{y}^{2}}{2} \cdot 
{\mbox{$_3\mathsf{F}_2$}\left(\frac{1}{3},\frac{2}{3},1;\,\frac{3}{2},2;\,{\frac {9}{8}}\,{y}^{2}\right)}.
\]
The radius of convergence of $T_2(y)$ is $2\sqrt{2}/3$.
\end{ex}
\begin{ex}[trivalent trees - non convex action] \label{ex:triv} 
Similar to Example~\ref{ex:tetranonconv}, the action $F_3(x) = -x^2/2 + x^3/3!$ (i.e. $F^{(3)}=1$ and $F^{(k)}=0$ for $k>3$) is not convex and, therefore, does not possess an analytic Legendre transform. But we do obtain a unique combinatorial Legendre transform for this action. We start by noticing that the tree graphs stemming from the Feynman rules of this action are trees with unlabelled $3$-vertices and labelled $1$-vertices:\rm
\begin{center}
\includegraphics[scale=0.8]{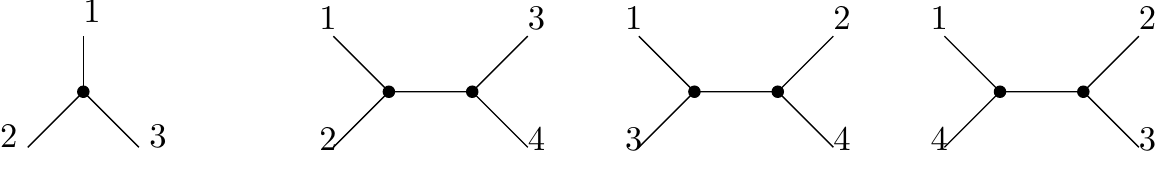}
\end{center}
The number $T_3^{(n)}$ of such trees with $n$ $1$-vertices is given by the {\em double factorial} $(2n-5)!!:=(2n-5)(2n-7)\cdots 3 \cdot 1$. (A trivalent tree with $n-1$ $1$-vertices has $2n-5$ edges each of which can  have a $3$-vertex inserted to yield a tree with $n$ $1$-vertices. This process is reversible.). This can also be seen from \eqref{eq:degist}. The formal power series of these trees is then:
\[
T_3(y) = \frac{y^2}{2} + \sum_{n=3}^{\infty} (2n-5)!!\frac{y^n}{n!} 
\]
The radius of convergence of this power series is $1/2$ and it possesses the closed form representation:
\[
T_3(y) =  \frac{y^2}{2}+{\frac {{y}^{2} \left( -1+3\,y+\sqrt {1-2\,y} \right) }{3 \left( 1
+\sqrt {1-2\,y} \right) ^{2}}}
\]
\end{ex}

\begin{ex} (Nonconvex non-polynomial action convergent on $\mathbb{R}$)
Let us now consider an example of a non-polynomial action, i.e., an action which possesses infinitely many different types of vertices. The tree graphs for the action $F_4(x) = -x^2/2 + \sum_{k=3}^{\infty} x^k/k!=e^x-1-x-x^2$, {\em i.e.} $F^{(k)}=1$ for $k=3,\ldots$, are trees $\mathsf{t}$ with  with unlabelled $k$-vertices for $k=3,\ldots$ and labelled $1$-vertices. The number $T_4^{(n)}$ has no simple formula but it can be computed with the sum in \eqref{eq:degist}. It is known that the formal power series of these trees $T_4(y) = \frac{y^2}{2} + \sum_{k=3}^{\infty} T_5^{(n)} y^n/n!$ has closed form
\[
T_4(y) = - W^2(-e^{(y-1)/2}/2) -2W(-e^{(y-1)/2}/2)  + y^2/4-y/2,
\]
where $W(\cdot)$ is the {\em Lambert $W$ function} (e.g. see \cite[A000311]{oeis} for values of the sequence $(T_4^{(n)})$ and references on these trees). $T_4(y)$ possesses radius of convergence $2\ln(2)-1$. These trees are commonly called {\em phylogenetic trees} since the labelled $1$-vertices can represent biological species.
\end{ex}

\begin{ex} (Nonconvex non-polynomial action convergent with radius of convergence 1)
The tree graphs for the action $F_5(x) = -x^2/2 + \sum_{k=3}^{\infty} x^k/k=-\ln(1-x)-x-x^2$, {\em i.e.} $F^{(k)} = (k-1)!$, are trees $\mathsf{t}$ with unlabelled $k$-vertices for $k=3,\ldots$ and labelled $1$-vertices with a weight $w(\mathsf{t}) = \prod_v (deg(v)-1)!$. The number $T_5^{(n)}$ of such weighted trees with $n$ $1$-vertices is given by $(n-1)!s_{n-2}$ where $s_n$ is the {\em Little Schr\"oder number} \cite[A000311]{oeis}. The formal power series of these trees is then
\[
T_5(y) = \frac{y^2}{2} + \sum_{n=3}^{\infty} s_{n-2} \frac{y^n}{n}, 
\]
and since $(1/y)\frac{d}{dy}T_5(y)$ is the ordinary generating series  of Schr\"oder numbers which has a closed form then
\[
\frac{d}{dy} T_5(y) = (1+y+\sqrt{1-6y+y^2})/4.
\]
\end{ex}

\begin{ex} \label{ex:alex} (Nonconvex non-polynomial action with radius of convergence 1)
The tree graphs for the action $F_6(x) = -x^2/2+\sum_{k=3}^{\infty} x^k/(k(k-1)(k-2))=-(x-1)^2\ln(1-x)/2-x/2+x^2/4$, {\em i.e.} $F^{(k)}=(k-3)!$, are trees $\mathsf{t}$ with unlabelled $k$-vertices for $k=3,\ldots$ and labelled $1$-vertices with a weight $w(\mathsf{t}) = \prod_v (deg(v)-3)!$. Using \eqref{eq:degist} one can show that the  number $T_6^{(n)}$ of such weighted trees with $n$ $1$-vertices is given by $(n-2)^{n-2}$. Thus the formal power series of these trees is 
\[
T_6(y) = \sum_{n\geq 2}^{\infty} (n-2)^{n-2} \frac{y^n}{n!}.
\]
The radius of convergence of this power series is $1/e$ and it possesses a closed form representation
\[
T_6(y) =  \frac{x^2}{2W(-x)} + \frac{x^2}{4W(-x)^2} + x - \frac{1}{4}.
\]
\end{ex}

\begin{ex}[Action with zero radius of convergence]
The tree graphs for the action $F_7(x) = -x^2/2 + \sum_{k\geq 3} (k-1)!x^k$ (i.e. $F^{(k)}=k!(k-1)!$ for all $k\geq 3$) are trees  
 $\mathsf{t}$ with unlabelled $k$-vertices for $k=3,\ldots$ and labelled $1$-vertices with a weight $w(\mathsf{t}) = \prod_v \left(\,deg(v)!(deg(v)-1)!\,\right)$. If $T_7^{(n)}$ is the weighted sum of such trees with $n$ $1$-vertices and $T_7(y)$ is the formal power series $y^2/2 + \sum_{n=3}^{\infty} T_7^{(n)} y^n/n!$. From \eqref{eq:degist} we have that 
\[
T_7^{(n)} = \frac{y^2}{2} + \sum_{n=3}^\infty \left(\sum_{n_3,n_4,\ldots,n_k} \frac{(n-2+\sum_{j=3}^k n_j)!}{\prod_{j=3}^k n_j!} \prod_{j= 3}^k (j!)^{n_j}\right) \frac{y^n}{n!},
\]
where the internal sum is over all finite tuples $(n_3,n_4,\ldots,n_k)$ of nonnegative integers satisfying the Euler relation $\sum_{j=3}^k (j-2)n_j = n-2$. Note that $T_7(y)$ also has zero radius of convergence due the product of factorials $\prod_{j=3}^k (j!)^{n_j}$ in the numerator. 

In this example, the original power series $F_7(x)$ is divergent everywhere, as is its Legendre transform $T_7(y)$. Notice that the Legendre transform even in this case is not forgetful. This is because, as we discussed in Remark~\ref{rem:remembering}, our Legendre transform is a quasi-involution and therefore allows us to fully recover the original power series by applying the Legendre transform again.  
\end{ex}


\begin{rem}
\rm The examples of the actions above belong to two classes of univariate formal power series whose coefficients can be computed efficiently. These are the class of \emph{differentiably finite} series (series $a(x)$ in the ring of formal power series $k[[x]]$ for a field $k$ such that there exist polynomials $p_0(x),\ldots,p_d(x)\neq 0$ in $k[x]$ such that $\sum_{j=0}^d p_j(x) d^j a/dx^j =0$) and the class of so-called \emph{algebraic} series which forms a proper subset (these are series $a \in k[[x]]$ such that there exist polynomials $p_0(x),\ldots,p_d(x)$ in $k[x]$, not all zero, such that $\sum_{j=0}^d p_j(x) a^j =0$); for more details, see \cite[Ch. 6]{EC2}. 

The actions of all of the above examples are differentiably finite and the first three examples are also algebraic.
Differentiably finite power series are not necessarily closed under compositional inverses (take, e.g., $a(x) = \tan(x)$ \cite{Sta}).
Algebraic power series, however, are closed under compositional inverses and also under formal differentiation and integration. This means that 
if an action $F(x)$ is algebraic then also its Legendre transform $T(y)$ is algebraic and therefore efficiently computable. 
\end{rem}

\section{Proofs} \label{sec:proofs}

\subsection{Proof of Theorem~\ref{thm:main}} 

The aim is to prove Theorem~\ref{thm:main}, and in the process to show that (\ref{toshow}) reduces to a basic statement about tree graphs, namely the Euler relation
\begin{equation}
1 ~ = ~ V(\g)~-~E(\g)  ~\label{ec}
\end{equation}
where $V(\g)$ and $E(\g)$ are the number of vertices and edges of the tree graph, respectively. The algebraic nature of the Euler relation, therefore, explains the robustness of the quantum field theoretic Legendre transform with regarding any analytic issues. 

The aim is to show that
\eqref{toshow} is equivalent to \eqref{ec} term by term in the formal power series'. To this end, we now prove that 
equation \eqref{toshow} holds for the coefficients of each of the $m$ - powers of $K$ for all $m\geq 2$, i.e., that 
\begin{equation} \label{eqfivep}
\frac{\partial}{\partial{K_{a_1}}}\dots\frac{\partial}{\partial{K_{a_m}}}~
T\vert_{K=0} = ~
\frac{\partial}{\partial{K_{a_1}}}\dots\frac{\partial}{\partial{K_{a_m}}}
\left(K_a\Psi[K]_a  + F[\Psi[K]]\right) \vert_{K=0}. 
\end{equation}
On the LHS of \eqref{eqfivep} we have, from the definition of $T[K]$, 
the sum of the tree graphs $\g$  which possess $m$ ends. They are labelled by
{}$a_1,\dots,a_m$ and each such $\g$ occurs exactly once, and with the weight
$\omega(\g)=1$. To complete the proof we will show now that the right hand side
consists exactly of all such graphs  with the   multiplicity $V(\g)-E(\g)$. To see this, we write the right hand side of \eqref{toshow}  in  terms  of
tree graphs. We notice that  $K_a=$~$1$-$vertex$,
 and that
\begin{equation}
F[\Psi[K]] ~=~ \sum_{n\ge 2} \frac{1}{n!}~
F^{(n)}_{a_1,...,a_n}~\Psi[K]_{a_1}\cdots~ \Psi[K]_{a_n} \label{F},
\end{equation}
contains $F^{(2)}_{a_1,a_2} ~=~ -(edge)^{-1}$ and
$F^{(n)}_{a_1,...,a_n}~=~n$-$vertex$ for $n>2$. Therefore, the RHS of
\eqref{toshow} takes the form:
\[
(1\mbox{-}vertex)_a~\Psi[K]_{a}~-~\frac{1}{2}~\Psi[K]_{b_1}(edge)_{b_1,b_2}^{-1}~\Psi[K]_{b_2} ~+~ \sum_{n>2} \frac{1}{n!}~
(n\mbox{-}vertex)_{a_1\cdots a_n}~\Psi[K]_{a_1}\cdots~\Psi[K]_{a_n}.
\]
We now differentiate $m$ times and set $K=0$,
to obtain \eqref{toshow}, which in fact contains only graphs which have $m$ labelled ends.
We now recall that, by definition, $\Psi[K]_b~=~\partial
iT[K] / \partial K_b$ is the sum of all trees which have one end vertex
removed. Therefore, the RHS of \eqref{eqfivep} equals the sum of all the tree graphs which possess $m$ labelled ends - obtained by taking one term of  the action, namely \rm $-(\mbox{edge})^{-1}$ or an $n$-vertex, and by attaching the
sum of all the tree graphs to each of its  free  indices. After some
simplification, we obtain that \eqref{2nd} reads, schematically:

\begin{center}
\includegraphics[scale=0.7]{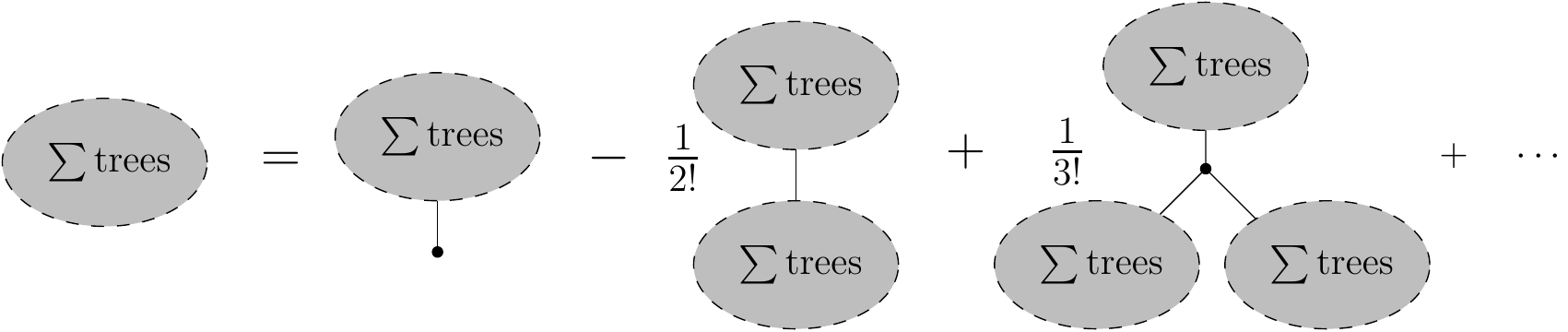}

\end{center}

Let us now consider any tree graph, $\g$, which possesses $m$ labelled
ends. On the LHS, it will occur exactly once. Let us now count this graph's occurrences
on the RHS. To this end, we choose an arbitrary edge $e$  of  $\g$,  and we denote by $\legr$ and $\rigr$ the two subtrees which are to the two sides of
the edge. Then  on the RHS, in the second term the edge $e$ occurs twice,
and because of the $1/2$ in the action, $g$ occurs with weight
$-E(\g)$.

At this point we choose any $n$-vertex, $v$, of $\g$, with $n\in\{1,3,4,...\}$.
We let $\{ {\mathfrak t}_j \}_{j=1}^n $ arbitrarily stand for the sub-trees which emanate from  its  legs. In the remaining terms on the RHS of \eqref{toshow}, the designated vertex, $v$, which has the attached subtrees $\{{\mathfrak t}_j\}_{j=1}^n$ is arising $n!$
times, and this number is cancelled by the $1/n!$ in the action. Therefore, in the remaining terms, $\g$ occurs $V(\g)$ times and it, therefore, occurs with the overall weight $V(\g)-E(\g)$ on the right hand side. This then completes our proof of Theorem~\ref{thm:main}.

\subsection{Algebraic proof of Corollary~\ref{cor:main}}
\noindent To compute the Legendre transform of $F(y)$, we define a new variable $y(x)= -\partial F(x)/\partial x$. This power series has a compositional inverse $y(x)$ has no constant term. We denote this compositional inverse by $x(y)$. By Propostion~\ref{prop:propsLT}~(i) we know that $x = \partial T(y)/\partial y$. Thus, by formally integrating $x(y)$ (i.e., by linearly mapping each monomial $y^n \mapsto y^{n+1}/(n+1)$) we obtain $T(y)$.  

Next we use Loday's explicit formula for compositional inverses of formal power series \cite[\S 6]{loday}, see also the explicit formulas in \cite[\S 2.5]{gessel}.

\begin{lemma}[Loday \cite{loday}] \label{lem:Loday}
If $a(x) = x + a_1x^2 + a_2x^3 + \cdots$ and $b(x) = x+ b_1x^2 + b_2x^3 + \cdots$ is the compositional inverse of $a(x)$, then 
\[
b_n = \sum_{m_1,m_2,\ldots,m_k} (-1)^{\sum_j m_j} B(n+1,m_1,m_2,\ldots,m_k) a_1^{m_1}\cdots a_k^{m_k},
\]
where $B(m_{-1},m_1,m_2,\ldots,m_k)$ is the number unlabelled rooted trees with $m_j$ vertices with $j+1$ children (so $m_{-1}$ are leafs). 
\end{lemma}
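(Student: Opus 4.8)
The plan is to derive the formula combinatorially from the standard tree interpretation of Lagrange inversion, rather than reproving Lagrange inversion from scratch. First I would set up the dictionary: since $a(x) = x + \sum_{j\geq 1} a_j x^{j+1}$, write $a(x) = x\cdot\phi(x)$ with $\phi(x) = 1 + \sum_{j\geq 1} a_j x^j$, so $\phi$ has invertible constant term and the compositional inverse $b(x) = x\cdot\psi(x)$ exists. The Lagrange inversion theorem (in the form stated in \cite[\S 1.2.4]{jackson}) gives $[x^{n+1}]\,b(x) = \frac{1}{n+1}[x^n]\,\phi(x)^{-(n+1)}$, equivalently $b_n = [x^{n+1}]\,b(x) = \frac{1}{n+1}[x^n]\,\phi(x)^{-(n+1)}$. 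The task then reduces to identifying the coefficient $\frac{1}{n+1}[x^n]\,\phi(x)^{-(n+1)}$ with the tree sum $\sum_{m_1,\ldots,m_k} (-1)^{\sum_j m_j} B(n+1,m_1,\ldots,m_k)\, a_1^{m_1}\cdots a_k^{m_k}$.

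The key step is a combinatorial reading of $\phi(x)^{-(n+1)}$. I would expand $\phi(x)^{-1} = \sum_{\mathsf{c}} (-1)^{|\mathsf{c}|} a_{c_1}\cdots a_{c_{|\mathsf{c}|}} x^{c_1 + \cdots + c_{|\mathsf{c}|}}$ over compositions $\mathsf{c} = (c_1,\ldots)$ of nonnegative-length with parts $c_i \geq 1$ (the standard geometric-series expansion of $1/(1+\sum_{j\geq1}a_j x^j)$), so that $\phi^{-(n+1)}$ is a product of $n+1$ such factors. A monomial $a_1^{m_1}\cdots a_k^{m_k} x^n$ arises from choosing, for each of the $n+1$ factors, an ordered sequence of parts, with the multiset of all chosen parts equal to $\{1^{m_1},2^{m_2},\ldots,k^{m_k}\}$ and total $\sum_j j\,m_j = n$; the sign is $(-1)^{m_1+\cdots+m_k}$. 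Multiplying by $\frac{1}{n+1}$ is exactly the standard device (root-choice / cycle lemma) that converts "an ordered list of $n+1$ sequences of parts" into "an unlabelled rooted plane forest on one distinguished root", i.e. an unlabelled rooted tree in which a vertex contributing a part $j$ has $j+1$ children (the extra child accounting for the factor-boundary), with $m_j$ such vertices and $m_{-1} = n+1 - \sum_j m_j$ leaves. This is precisely the quantity $B(n+1, m_1,\ldots,m_k)$; the $\frac{1}{n+1}$ is absorbed by the standard fact that rooted trees are counted by $\frac1{n+1}\binom{\text{ordered structures}}{\cdot}$. I would cite \cite{loday} and \cite[\S 2.5]{gessel} for the bijection and just indicate it, as the paper already promises explicit formulas there.

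The main obstacle is making the identification between the analytic coefficient extraction and the tree count \emph{clean and correctly signed} — in particular pinning down exactly which vertices of the rooted tree carry which part, and confirming that leaves (degree-one non-root vertices, contributing $m_{-1}$) correspond to the "$1$" term in the geometric series rather than to an $a_j$. A secondary subtlety is the bookkeeping of the $\frac{1}{n+1}$ versus the number of ways to mark a root, which is the content of the cycle lemma; I would either invoke Lagrange inversion as a black box (cleanest) or, if a self-contained argument is wanted, prove the root-marking bijection directly. Given the surrounding text, invoking Lagrange inversion from \cite{jackson} and the explicit formula from \cite{loday,gessel} is the right level of detail, so the proof can be quite short.
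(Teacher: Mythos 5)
The paper itself offers no proof of this lemma: it is quoted verbatim from Loday \cite{loday} (with a pointer to \cite[\S 2.5]{gessel}) and used as a black box, so your derivation is already more than the source text supplies. Your route is correct, and in fact it can be closed off more cheaply than you suggest. Writing $a(x)=x\phi(x)$ with $\phi(x)=1+\sum_{j\ge 1}a_jx^j$, Lagrange inversion gives $b_n=\tfrac{1}{n+1}[x^n]\,\phi(x)^{-(n+1)}$ exactly as you state; expanding by the negative binomial theorem and the multinomial theorem then yields
\[
b_n=\sum_{\sum_j j m_j=n}(-1)^{\sum_j m_j}\;\frac{1}{n+1}\binom{n+m}{m}\binom{m}{m_1,\ldots,m_k}\;a_1^{m_1}\cdots a_k^{m_k},
\qquad m:=\sum_j m_j,
\]
and the prefactor equals $\frac{(n+m)!}{(n+1)!\,\prod_j m_j!}=\frac{1}{n+1+m}\binom{n+1+m}{n+1,m_1,\ldots,m_k}=B(n+1,m_1,\ldots,m_k)$ by the product formula of Lemma~\ref{lem:numBTrees} (which is independent of Loday's result, so there is no circularity). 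The constraint $\sum_j jm_j=n$ is exactly the tree condition $m_{-1}+\sum_j m_j=1+\sum_j(j+1)m_j$ with $m_{-1}=n+1$, and the sign comes out automatically. Thus the root-marking/cycle-lemma bijection that you flag as the main obstacle is not actually needed once the counting formula for $B$ is granted: the identification is a one-line algebraic check. The only genuinely loose spot in your write-up is the combinatorial gloss (``a vertex contributing a part $j$ has $j+1$ children, the extra child accounting for the factor-boundary''), which as phrased does not pin down a bijection and should either be replaced by the algebraic identification above or spelled out as the standard plane-tree decomposition; with that repair, or with Lagrange inversion invoked as a black box as you propose, the argument is complete and checks out on small cases ($b_1=-a_1$, $b_2=2a_1^2-a_2$).
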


The number of trees in the above formula has a product formula. We denote the \emph{multinomial coefficient} by $\binom{N}{n_1,n_2,\ldots,n_k} := N!/(\prod_{j=1}^k m_j!)$ where $n_1+\cdots + n_k = N$.
\begin{lemma}[Sec. 5.3 \cite{EC2} or 2.7.14 \cite{jackson}, Eq. 19 \cite{BernardiMorales}] \label{lem:numBTrees}
\[
B(m_{-1},m_1,\ldots,m_k) = \frac{1}{m_{-1}+m_1+\cdots + m_k} \binom{m_{-1}+m_1+\cdots + m_k}{m_{-1},m_1,\ldots,m_k},
\]
if $m_{-1}+m_1+m_2+\cdots = 1+ 2m_1+3m_2+\cdots$, and $B(m_{-1},m_1,\ldots,m_k)=0$ otherwise.
\end{lemma}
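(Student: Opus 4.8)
The plan is to recognize the identity as the classical Raney (generalized ballot) formula for the number of plane rooted trees with a prescribed out-degree distribution, and to establish it in two steps: first the vanishing clause, which is nothing but the Euler/handshake relation, and then the product formula, which I would deduce from the cycle lemma of Dvoretzky--Motzkin. Write $N := m_{-1}+m_1+\cdots+m_k$ for the total number of vertices of a tree counted by $B(m_{-1},m_1,\ldots,m_k)$, where $m_{-1}$ is the number of leaves (out-degree $0$) and $m_j$ is the number of vertices of out-degree $j+1$ for $j\ge 1$, so no vertex has out-degree exactly $1$. Although the statement reads ``unlabelled rooted trees'', the pertinent model here is that of \emph{plane} rooted trees; since the monomials $\prod_j a_j^{m_j}$ that multiply $B$ in Lemma~\ref{lem:Loday} are symmetric in the coefficients of equal degree, only this total count enters, and it is the Raney number below.

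For the vanishing clause I would argue directly: a rooted tree with $m_{-1}$ leaves and $m_j$ vertices of out-degree $j+1$ has exactly $N$ vertices and $\sum_{j\ge 1}(j+1)m_j$ edges, each non-root vertex accounting for the edge to its parent. Since any tree has one more vertex than edge, such a tree exists only when
\[
m_{-1}+m_1+m_2+\cdots \;=\; 1 + 2m_1 + 3m_2 + \cdots,
\]
which is precisely the stated condition; if it fails there are no such trees and $B(m_{-1},m_1,\ldots,m_k)=0$.

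Assume now the condition holds. Reading the vertices of a plane rooted tree in preorder (depth-first, children left-to-right) and recording the out-degree of each sets up a bijection between the trees counted by $B(m_{-1},m_1,\ldots,m_k)$ and the words $w=(d_1,\ldots,d_N)$ over $\{0,2,3,\ldots\}$ having $m_{-1}$ letters equal to $0$ and $m_j$ letters equal to $j+1$, such that the partial sums $\sum_{i\le t}(d_i-1)$ are nonnegative for all $t<N$ and equal $-1$ at $t=N$ (the final value being forced by the condition above); these are the {\L}ukasiewicz codes. Ignoring the partial-sum requirement, the number of words with the prescribed letter-multiset is $\binom{N}{m_{-1},m_1,\ldots,m_k}$. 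The steps $d_i-1$ lie in $\{-1,1,2,\ldots\}$ and sum to $-1$, so the cycle lemma applies: exactly one of the $N$ cyclic rotations of each such word is a valid code, and all $N$ rotations are distinct, since a proper period $p\mid N$ would force the sum over one period to be the non-integer $-p/N$. Hence the valid codes number $\tfrac{1}{N}\binom{N}{m_{-1},m_1,\ldots,m_k}$, which is the claimed formula for $B$.

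An alternative route is to run multivariate Lagrange inversion on the functional equation $y = x + \sum_{j\ge 1} a_j\,y^{j+1}$ satisfied by the generating series of these trees (with $x$ marking leaves and $a_j$ marking out-degree-$(j+1)$ vertices), reading off the coefficient of $x^{m_{-1}}\prod_j a_j^{m_j}$ directly; this is shorter but conceals the ballot mechanism. Either way the only delicate points are the aperiodicity clause in the cycle lemma and pinning down the boundary conditions of the preorder-code bijection exactly (strict nonnegativity of proper prefixes versus the forced value $-1$ at the end); everything else is routine, so I expect this bookkeeping to be the main --- and only mild --- obstacle.
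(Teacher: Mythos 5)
Your proof is correct and complete. Note, though, that the paper does not actually prove Lemma~\ref{lem:numBTrees}: it is stated as a known result and discharged by citation to Stanley (EC2, \S 5.3), Goulden--Jackson (2.7.14), and Bernardi--Morales (Eq.~19). What you have written is essentially the standard proof found in the first of those references: the preorder/{\L}ukasiewicz encoding of plane rooted trees by out-degree words, followed by the Dvoretzky--Motzkin cycle lemma, with the vanishing clause being the vertex--edge count. Your handling of the two delicate points is right: the aperiodicity of the words (a proper period $p\mid N$ would force each period to sum to the non-integer $-p/N$) and the boundary conditions of the code (proper prefixes of $\sum(d_i-1)$ nonnegative, total $-1$). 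Your clarifying remark that ``unlabelled rooted trees'' must be read as \emph{plane} rooted trees is also well taken --- the product formula counts plane trees, not isomorphism classes of abstract rooted trees --- and it is consistent with how the paper itself uses these objects later (the proof of Lemma~\ref{lem:numATrees} explicitly converts the $B$-trees into plane trees and invokes the cyclic orderings of size $(j-1)!$ at each vertex). The alternative you sketch, multivariate Lagrange inversion applied to $y=x+\sum_j a_j y^{j+1}$, is shorter and closer in spirit to the paper's first (algebraic) proof of Corollary~\ref{cor:main}, but the cycle-lemma route has the advantage of exhibiting the bijective mechanism, which matches the combinatorial flavor of the surrounding section.
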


Combining Lemmas~\ref{lem:Loday} and \ref{lem:numBTrees} we have that the compositional inverse of $y = - \partial F/\partial x = x - \sum_{j=3}^{\infty} \frac{F_j}{(j-1)!} x^{j-1}$,
is $x(y) = y + b_1y^2 + b_2y^3+\ldots$, where 
\[
b_n = \sum_{m_1,\ldots,m_k} \frac{1}{n+1+m_1+\cdots + m_k} \binom{n+1+m_1+\cdots + m_k}{n+1,m_1,\ldots,m_k} \prod_{j=3}^k \left(\frac{F^{(j)}}{(j-1)!}\right)^{m_{j-2}},
\]
By formally integrating $x(y)$ with respect to $y$ we obtain $T(y)=y^2/2 + \sum_{j=3}^{\infty} T^{(j)} y^j/j!$. Comparing coefficients of $y^n/n!$  we obtain that $T^{(n)} = (n-1)!\,b_{n-2}$. Thus relabeling $m_0=n-1$, $m_{j-2}=n_{j}$ for $j\geq 3$ and simplifying we get
\[
T^{(n)} = (n-1)! \,b_{n-2}=\sum_{n_3,n_4,\ldots,n_k} \frac{(n-2+\sum_{j=3}^k n_j)!}{\prod_{j=3}^k n_j!} \prod_{j= 3}^k \left(\frac{F^{(j)}}{(j-1)!}\right)^{n_j},
\]
as desired.

\subsection{Combinatorial proof of Corollary~\ref{cor:main}}

The tree graphs for the generic action $F(x)$ are trees with labelled $1$-vertices and unlabelled $k$-vertices for $k=3,\ldots$:
\begin{center}
\includegraphics[scale=0.8]{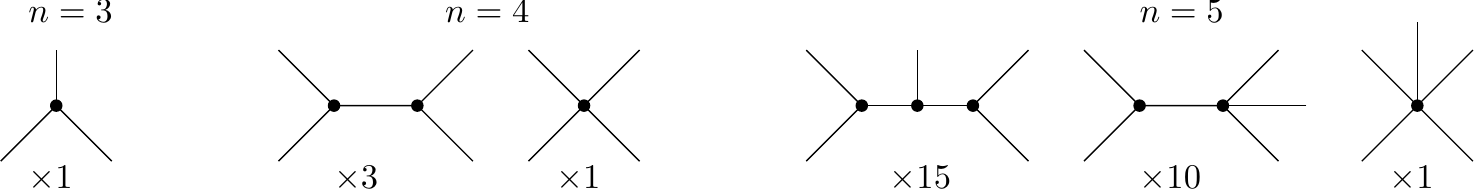}
\end{center}
where each tree $\mathsf{t}$ has a weight $w(\mathsf{t}) = \prod_v F^{(deg(v))}$. 
By Theorem~\ref{thm:main} the Legendre transform $T(y)$ is the formal power series of these trees. The coefficient $T^{(n)}$ of $y^n/n!$ for $n\geq 3$ of $T(y)$ equals
\begin{align}
T^{(n)} &= \sum_{\mathsf{t}} w(\mathsf{t}) = \sum_{n_3,n_4,\ldots,n_k} A(n,n_3,n_4,\ldots,n_k) \prod_{j\geq 3} (F^{(j)})^{n_j}, \label{eq:ATrees}
\end{align}
where $A(n,n_3,n_4,\ldots,n_k)$ is the number of trees with $n$ labelled $1$-vertices, $n_3, n_4,\ldots$ unlabelled $3$-vertices, $4$-vertices, \ldots. Note that the nonnegative integers $n,n_3,n_4,\ldots$ are subject to the Euler relation 
\begin{equation} \label{eq:euler_relAtrees}
\sum_{j= 3}^k (j-2)n_j=n-2.
\end{equation}
The number of trees $A(n,n_3,n_4,\ldots,n_k)$ has an explicit product formula.
\begin{lemma} \label{lem:numATrees}
\[
A(n,n_3,n_4,\ldots) = \frac{(n-2+\sum_{j\geq 3} n_j)!}{\prod_{j\geq 3} n_j!} \prod_{j\geq 3}\left(\frac{1}{(j-1)!}\right)^{n_j},
\]
provided \eqref{eq:euler_relAtrees} holds and $A(n,n_3,n_4,\cdots)=0$ otherwise.
\end{lemma}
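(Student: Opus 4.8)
The plan is to count, directly and bijectively, the number $A(n,n_3,n_4,\ldots)$ of trees with $n$ labelled $1$-vertices (the ``ends'') and $n_j$ unlabelled $j$-vertices for each $j\geq 3$. First I would record the elementary bookkeeping: such a tree has $V=n+\sum_{j\geq 3} n_j$ vertices, so by the Euler relation \eqref{ec} it has $E=V-1=n-1+\sum_{j\geq 3}n_j$ edges; matching this with the degree sum $2E = n\cdot 1 + \sum_{j\geq 3} j\,n_j$ recovers exactly \eqref{eq:euler_relAtrees}, which shows that $A=0$ unless \eqref{eq:euler_relAtrees} holds. For the main count I would pass to a labelled model: temporarily assign distinct labels $n+1,\ldots,V$ to the internal vertices as well, count the labelled trees, and then divide by the overcounting factor $\prod_{j\geq 3} n_j!$ coming from the fact that the $n_j$ internal vertices of valence $j$ are in fact indistinguishable (vertices of different valence are already distinguished by their degree, so only within-valence permutations overcount).

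The heart of the argument is therefore to count \emph{fully labelled} trees on the vertex set $\{1,\ldots,V\}$ in which vertex $i$ is required to have a prescribed degree $d_i$ (here $d_i=1$ for $i\leq n$ and $d_i=j$ for the internal vertices designated to have valence $j$). This is a classical application of the Cayley/Prüfer correspondence: the number of labelled trees on $V$ vertices with prescribed degree sequence $(d_1,\ldots,d_V)$ is the multinomial coefficient $\binom{V-2}{d_1-1,\,d_2-1,\,\ldots,\,d_V-1}$, valid precisely when $\sum_i (d_i-1)=V-2$. In our situation the $n$ ends contribute $d_i-1=0$, while each internal valence-$j$ vertex contributes $d_i-1=j-1$, so the multinomial coefficient collapses to $\frac{(V-2)!}{\prod_{j\geq 3}\big((j-1)!\big)^{n_j}}$, with $V-2 = n-2+\sum_{j\geq 3}n_j$ by the displayed constraint. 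Dividing by $\prod_{j\geq 3} n_j!$ to unlabel the internal vertices then yields
\[
A(n,n_3,n_4,\ldots) \;=\; \frac{(n-2+\sum_{j\geq 3}n_j)!}{\prod_{j\geq 3} n_j!}\,\prod_{j\geq 3}\Big(\frac{1}{(j-1)!}\Big)^{n_j},
\]
which is exactly the claimed formula.

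The step I expect to be the only real point requiring care is the justification that dividing the labelled count by $\prod_{j\geq 3} n_j!$ is exactly right — i.e.\ that the symmetric group $\prod_j S_{n_j}$ acts \emph{freely} on the set of fully labelled trees with the given degree profile. This is immediate: a tree together with a labelling of \emph{all} its vertices has no nontrivial automorphisms fixing every label, so permuting the labels within each valence class always produces a genuinely different labelled object; hence each unlabelled-internal tree is hit exactly $\prod_j n_j!$ times. (Alternatively, one can avoid the quotient entirely and apply the weighted Prüfer-type count for trees with some labelled and some unlabelled vertices, e.g.\ the generalized Cayley formula in \cite[Sec.\ 5.3]{EC2} or \cite[Eq.\ 19]{BernardiMorales}, which gives the stated product formula directly.) Everything else is the routine substitution of the degree data into the multinomial coefficient and the arithmetic identity $V-2 = n-2+\sum_{j\geq 3}n_j$.
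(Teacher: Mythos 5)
Your proof is correct, but it takes a genuinely different route from the paper's. The paper proves the lemma by a double-counting identity, \eqref{eq:doublecount}: it puts the trees counted by $A$ in bijection with \emph{plane} rooted trees (replace the root of a rooted tree by a labelled $1$-vertex), counts those on one side as $B(n+1,n_3,\ldots)\cdot(n+1)!$ using the known product formula of Lemma~\ref{lem:numBTrees} for unlabelled rooted plane trees, and on the other side as $A(n+2,n_3,\ldots)$ times the number $\prod_j ((j-1)!)^{n_j}$ of cyclic orderings at the internal vertices. You instead fully label the internal vertices, invoke the Pr\"ufer/Cayley count $\binom{V-2}{d_1-1,\ldots,d_V-1}$ of labelled trees with prescribed degrees, and quotient by the free action of $\prod_j S_{n_j}$; your freeness argument (an automorphism of a tree whose internal vertices all have degree $\geq 3$ and which fixes every leaf must fix every vertex, since each internal vertex lies on a leaf-to-leaf path) is sound, and your derivation of \eqref{eq:euler_relAtrees} from the degree sum matches the condition $\sum_i(d_i-1)=V-2$ under which the Pr\"ufer formula applies. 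Your approach is more self-contained and arguably more elementary, needing only the classical degree-restricted Cayley formula. What the paper's route buys is the explicit bridge \eqref{eq:doublecount} between the ``unrooted'' numbers $A$ and the ``rooted plane'' numbers $B$, which is exactly what is needed to reconcile the combinatorial proof of Corollary~\ref{cor:main} with the algebraic one via Loday's compositional-inverse formula (Lemma~\ref{lem:Loday}), where the $B$'s appear naturally; your argument, being independent of $B$, does not supply that link.
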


Corollary~\ref{cor:main} then follows by using Lemma~\ref{lem:numATrees} in \eqref{eq:ATrees}.

\begin{proof}[Proof of Lemma~\ref{lem:numATrees}]
The result follows by combining \eqref{eq:doublecount} and Lemma~\ref{lem:numBTrees} with the relation
\begin{equation} \label{eq:doublecount}
A(n+2,n_3,n_4,\ldots,n_k) \cdot \prod_{j=3}^k ((j-1)!)^{n_j} = B(n+1,n_3,n_4,\ldots,n_k)\cdot (n+1)!,
\end{equation}
which we now prove by double counting. The RHS of \eqref{eq:doublecount} counts unlabelled rooted trees with $n+1$ $1$-vertices labelled $\{1,2,\ldots,n+1\}$. We replace the root by a $1$-vertex labelled $n+2$ to obtain a {\em plane} tree with $n+2$ labelled $1$-vertices. This last step is clearly reversible.

The LHS counts the same plane trees in the following classical way (see e.g. \cite{MT}). The plane tree is uniquely determined by the underlying tree with $n+2$ labelled $1$-vertices and unlabelled $j$-vertices for $j=3,\ldots$ and the cyclic ordering of size $(j-1)!$ of each of the $j$-vertices.   
\end{proof}

\begin{ex}
We illustrate \eqref{eq:doublecount} when $n_3=n$ and $0=n_4=n_5=\cdots$, from Example~\ref{ex:triv}, we know that $A(n+2,n-1) = (2n-1)!!$. It is well-known that the number of unlabelled rooted binary trees with $n+1$ leafs is given by the $n$th Catalan number $B(n+1,0,n)=C_n := \frac{1}{n+1}\binom{2n}{n}$ \cite[6.19 (d)]{EC2}. 
\begin{center}
\includegraphics[scale=0.8]{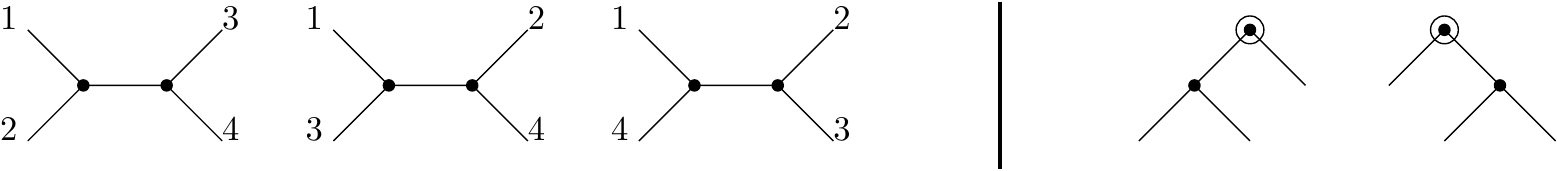}
\end{center}

One can check directly or by double counting plane binary trees with labelled $1$-vertices that
\[
(2n-1)!! \cdot 2^{n} = C_n \cdot (n+1)!.
\]

\end{ex}

\section{Outlook} 

We here defined the generating functionals of all graphs, connected graphs, tree graphs, as well as the action and quantum effective action as formal power series. This expresses the idea that all physical information is encoded in the coefficients of these power series, irrespective of whether or not these series are convergent and therefore also irrespective of whether or not they possess an interpretation as functions. Based on this shift of view, we constructed an algebraic/combinatorial Legendre transform that maps in between the coefficients of such formal power series. 

This then allowed us to rigorously prove the key equation, (\ref{toshow}), of the Legendre transform in quantum field theory. It is the equation that implies that the Legendre transform maps the classical action into the generating functional of tree graphs and that it maps the quantum effective action into the generating functional of connected graphs. The algebraic approach also allowed us to uncover the underlying reason for the robustness of this key equation in regards to analytic issues: (\ref{toshow}) ultimately expresses an Euler relation for graphs, which is a purely combinatorial concept. 

With this, two of the three steps in (\ref{steps}) can be understood purely algebraically: the Legendre transform and the exponentiation that maps the generating functional of connected graphs into the generating functional of all Feynman graphs.  

This suggests that also the remaining step in (\ref{steps}), the Fourier transform, should be understandable purely algebraically and combinatorially. Putting the Legendre transform on solid algebraic footing yielded the insight that the Legendre transform in QFT ultimately expresses an Euler relation. The Fourier transform, re-formulated in algebraic terms, may analogously lead to interesting new insights when applied to the quantum field theoretic path integral, for example, concerning the non-analyticity of the small coupling expansion, see \it e.g., \rm \cite{klauder}. Also, for example the origin of anomalies, conventially understood via Fujikawa's method as originating in the measure of the path integral, may acquire a new algebraic interpretation. 

First results in this direction recently appeared, see \cite{intbydiff1,intbydiff2}, yielding new methods for integration and for integral transforms such as the Fourier and Laplace transforms. An algebraic and combinatorial treatment of the Fourier transform will appear in \cite{JKM2}. Borinsky \cite[\S 2, 4]{Borinsky} recently gave an algebraic definition of the Fourier transform for univariate actions. 

Let us also note in passing that, since the Fourier and Legendre transforms are quasi-involutions, the generating functionals of tree graphs and of connected graphs can themselves be viewed as actions. This expresses a duality between these generating functionals of Feynman graphs on one hand and classical and quantum effective actions on the other.

\begin{rem}
\rm 
As we mentioned before, the present paper is building on the connection we previously made in \cite{JKM1} between the compositional inverse contained in the Legendre transform of an exponential power series and tree graphs. While completing the present paper, it has come to our attention that this connection (without mentioning the Legendre transform) to tree graphs was recently also made by Engbers, Galvin and Smyth \cite[Prop. 3.2]{galvin} where they in addition gave an explicit formula for the compositional inverse formula for the univariate case in terms of trees, amounting to corollary 3. For the exact relationship between the Legendre transform and the compositional inverse, see Prop.~\ref{prop:propsLT} (i).
\end{rem}

\smallskip

\noindent \bf Acknowledgement. \rm We thank Igor Pak and Karen Yeats for interesting discussions on the subject. We thank David Galvin for pointing out \cite{galvin}, Alexander Postnikov for suggesting Example~\ref{ex:alex}, and our referees for helpful comments and suggestions. AK and DMJ acknowledge
support by the Discovery program of NSERC. AHM was partially supported by an AMS-Simons Travel grant.


\end{document}